\newenvironment{restated}[2]{%
  \restatable{#1}{#2}
    \label{#2}
}{%
  \endrestatable
}
\newcommand{\restate}[1]{%
  \StrBefore{\getrefnumber{#1}}{.}[\refchapter]
  \let\oldchapter\thesection
  \renewcommand{\thesection}{\refchapter}
  \csname #1\endcsname*
  \let\thesection\oldchapter
}
\newenvironment{prestated}[1]{
  \StrBefore{\getrefnumber{#1}}{.}[\refchapter]
  \let\oldchapter\thesection%
  \renewcommand{\thesection}{\refchapter}%
}{%
  \let\thesection\oldchapter%
}
\newcommand{\prestate}[1]{%
  \csname #1\endcsname
}
\newcommand\appendix@section[1]{%
\refstepcounter{section}%
\orig@section*{Appendix \@Alph\c@section: #1}%
\addcontentsline{toc}{section}{Appendix \@Alph\c@section: #1}%
}
\let\orig@section\section
\g@addto@macro\appendix{\let\section\appendix@section}
\newcommand{\current}{\textnormal{\texttt{current}}\xspace}
\newcommand{\position}{\textnormal{\texttt{position}}\xspace}
\newcommand{\permit}{\textnormal{\texttt{permit}}\xspace}
\newcommand{\permits}{\textnormal{\texttt{permits}}\xspace}
\newcommand{\Proof}{\textnormal{\texttt{proof}}\xspace}
\newcommand{\block}{\textnormal{\texttt{block}}\xspace}
\newcommand{\proposal}{\textnormal{\texttt{proposal}}\xspace}
\newcommand{\timeout}{\textnormal{\texttt{timeout}}\xspace}
\newcommand{\prot}{PermitBFT\xspace}
\title{\prot: Exploring the Byzantine Fast-Path}
\author{Roland Schmid}{ETH Zürich, Switzerland}{roschmi@ethz.ch}{}{}
\author{Roger Wattenhofer}{ETH Zürich, Switzerland}{wattenhofer@ethz.ch}{}{}
\authorrunning{R.\ Schmid and R.\ Wattenhofer}
\keywords{permissioned byzantine ledger, byzantine fast path latency}
\begin{document}

\maketitle

\begin{abstract}
    \prot establishes a permissioned byzantine ledger in the partially synchronous networking model. For $n$ replicas, \prot tolerates up to $f < \frac{n}{3}$ byzantine replicas. It is the first BFT protocol to achieve a latency of just 2 message delays despite tolerating byzantine replicas throughout the ``fast track'', as long as they are not the leader.
    The design of \prot relies on two fundamental concepts. First, in \prot the participating nodes do not wait for a distinguished leader to act and subsequently confirm its actions, but send permits to the next leader proactively. Second, \prot achieves a separation of the decision powers that are usually concentrated on a single leader node. A leader in \prot controls \textit{which} transactions to include in a new block, but not \textit{where} to append the block in the block graph.
\end{abstract}

\section{Introduction}
\label{sec:introduction}
A distributed system consists of $n$ nodes. The system is byzantine fault tolerant (BFT) if it can tolerate at most $f < \frac{n}{3}$ arbitrarily malicious (\textit{byzantine}) nodes. BFT protocols have been studied in great detail since many decades, both in theory and practice. Nowadays, 
BFT protocols are the key to building ``permissioned blockchains'', an area traditionally known as  
``state machine replication''~\cite{lamport1984using,schneider1990implementing}.

\paragraph*{Permissioned Byzantine Ledger Without Consensus}{
    In practice, BFT protocols have many applications ranging from online shopping to credit card transactions, cryptocurrencies and stock market trades; whenever a set of clients makes concurrent requests for (or with) limited resources, the service providers have an interest to both prevent fraudulent and tolerate faulty behaviour in the system. While BFT systems have traditionally implemented consensus algorithms, recent works have shown that this is not needed for many applications such as cryptocurrencies~\cite{guerraoui2019consensusnumber}.
    While some of these applications may not require to establish a total ordering at all~\cite{baudet2020fastpay,sliwinski2019abc}, other applications may need to settle on a total ordering of committed requests eventually, but still provide constraints on the incoming requests.

    Originally, the interest in BFT systems has first been reignited by Castro and Liskov, when they presented their ``Practical'' BFT (PBFT) system \cite{castro1999practical}. After PBFT, a large number of other BFT systems emerged. These systems try to minimize the delay until transactions are committed. They do not optimize the worst case, since the worst case happens rarely in practice. Rather, they minimize the delay with a varying degree of optimistic assumptions (e.g., no message timeouts, leader not byzantine, no node byzantine, transactions already pre-ordered).
    Therefore, we set out to design a BFT protocol that is more resilient to byzantine faults and yet matches the commit latency of the so-called ``fast path'' operation modes of other BFT protocols.
}

\paragraph*{Byzantine-Resilient Latency}{
    To provide a practical (yet purely theoretical) evaluation of different BFT protocols with regard to their respective latencies without focusing entirely on worst-case scenarios, we consider the byzantine-resilient latency of a BFT protocol:
    \begin{definition}
        \label{def:optimistic-latency}
        We start the clock when the leader receives a transaction, and we stop the clock when this transaction is for sure included in the ledger. Local computations are free. The leader is timely and not byzantine. At most $f$ backup nodes are byzantine. Nodes may see messages in a different order, we only assume that messages take at most 1 time unit.
    \end{definition}
    \begin{table}[t]
        \centering
        \begin{tabular}{lcccc}
            \toprule
            \multirow{2}{*}{Protocol} & Best-Case & Byzantine-resilient & \multicolumn{2}{c}{Message complexity} \\
             & latency & latency & \textit{Normal operation} & \textit{(Leader) failure} \\
            \midrule
            PBFT \cite{castro1999practical} & $3$ & $3$ & $\mathcal{O}(n^2)$ & $\mathcal{O}(n^3)$ \\
            Zyzzyva \cite{kotla2007zyzzyva} & $2$ & $4$\makebox[0pt][l]{${}^{{}^*}$} & $\mathcal{O}(n)$ & $\mathcal{O}(n^2)$ \\
            Tendermint \cite{buchman2018latest} & $3$ & $3$ & $\mathcal{O}(n^2)$ & $\mathcal{O}(n^2)$ \\
            SBFT \cite{gueta2019sbft} & $2$ & $4$ & $\mathcal{O}(n)$ & $\mathcal{O}(n^2)$ \\
            HotStuff \cite{yin2019hotstuff} & $6$ & $6$\makebox[0pt][l]{${}^{{}^{**}}$} & $\mathcal{O}(n)$ & $\mathcal{O}(n)$ \\
            PaLa \cite{chan2018pala} & $4$ & $4$ & $\mathcal{O}(n)$ & $\mathcal{O}(n^2)$\makebox[0pt][l]{${}^{{}^{***}}$} \\
            \textbf{\prot} & $\boldsymbol{2}$ & $\boldsymbol{2}$ & $\boldsymbol{\mathcal{O}(n)}$ & $\boldsymbol{\mathcal{O}(n^2)}$\makebox[0pt][l]{${}^{{}^{***}}$} \\
            \toprule
        \end{tabular}
        \begin{scriptsize}
            ${}^{{}^*}$Zyzzyva
            's speculative latency of 2 rounds may not be achieved with a single crash failure.\\
            ${}^{{}^{**}}$HotStuff employs a pipelining scheme optimizing for throughput rather than latency.\\
            ${}^{{}^{***}}$In practice, the message complexity may be improved to $\mathcal{O}(n)$ with the ideas outlined in \Cref{sec:message-complexity}.
        \end{scriptsize}
        \medskip
        \caption{Evaluation of the \textit{byzantine-resilient latency} of selected practical BFT protocols, plus their message complexity.}
        \label{tab:optimistic-latency}
    \end{table}
    An overview of achieved byzantine-resilient latencies can be found in \Cref{tab:optimistic-latency}.
    In this paper we present \prot, a new BFT protocol to build a permissioned byzantine ledger. The \prot protocol inverts the process of previous BFT protocols, achieving a better bound for transaction commit delay in the presence of byzantine backups. In a nutshell, \prot works as follows.
}

\paragraph*{Protocol Description}{
    We consider a system of $n$ authenticated nodes, i.e.\ all messages are signed.
    Initially, every node stores the genesis block that contains an ordering of the participating nodes.
    
    All nodes permit a \textit{leader} (each node is selected round-robin) to append a new block at a specific position in the block graph. While the position of the new block is given by the permits, a leader may freely choose to include transactions in a new block. If the leader creates a block in a timely manner, the nodes will allow the next leader to append to that block. This can be seen as the nodes voting on the position \textit{where} to append the next block, thus ensuring that the block graph maintains its desired structural properties (details in \cref{sec:analysis}).
    
    In reality, however, there might be disagreement among the nodes on where to append a new block. There are essentially two reasons for this: either a leader did not produce the next block in a timely manner (from the perspective of sufficiently many nodes), or a byzantine leader created multiple blocks at the same position, a so-called \textit{fork}.
    
    If this happens, the \prot algorithm will unify the system by allowing blocks to be created with multiple parent blocks utilizing a 2-step process. Strictly speaking, \prot does not establish a chain of blocks, but rather a directed acyclic graph of blocks with a total ordering.
}



\paragraph*{Contribution}{
\prot establishes a permissioned byzantine ledger that provides a total ordering of committed requests and incurs a byzantine-resilient latency of only 2 message delays (1 RTT). To the best of our knowledge, it is the first BFT protocol to achieve this combination of properties.
\prot does, however, not solve byzantine consensus as not all incoming requests may be ordered (and also not committed). Hence, it depends on the application context whether this shortcoming is relevant or not. For example, when building a digital currency, this would only impact malicious actors and thus was not a concern in the protocol design.
}

\section{Related Work}
\label{sec:related-work}

Byzantine fault-tolerant (BFT) systems have been studied for several decades under the names of byzantine consensus and state machine replication~\cite{lamport1982byzantine,fischer1985impossibility,dwork1988consensus,castro1999practical}. Recently, the area regained traction motivated by the needs of permissioned blockchain~\cite{gueta2019sbft,yin2019hotstuff}. Unlike ``proof of work''-style blockchains such as Bitcoin~\cite{nakamoto2008bitcoin} causing high energy consumption, permissioned blockchains aim to leverage the known identities of the participants to achieve agreement based on authentication.

At the same time, permissioned blockchains can provide instant finality of a reached decision, providing indisputable commit certificates for each included transaction. Other than eventually-consistent blockchains, there simply cannot be another (conflicting) quorum of signatures reached as long as the fault-tolerance and cryptographic assumptions uphold.
For \textit{practical} BFT systems~\cite{castro1999practical}, the achievable commit latency for client requests is as low as one round-trip time (1 RTT), in an optimistic setting when neither network nor node failures occur~\cite{kotla2007zyzzyva,gueta2019sbft,song2008bosco}. More specifically, Zyzzyva~\cite{kotla2007zyzzyva} focuses on the idea of a speculative fast track. This can be motivated by the observation that, in practical systems, reliable systems will only rarely experience faulty or malicious behaviour of the participating nodes. However, even a single crash failure voids this advantage immediately. While this was addressed in SBFT~\cite{gueta2019sbft}, they similarly only tolerate a constant number of crash failures in the fast path. In contrast, with \prot, we propose a protocol which achieves a byzantine-resilient latency of 1~RTT when only the current leader is assumed to be correct and the network is stable. In other words, up to $f < \frac{n}{3}$ backup replicas may be faulty and exhibit even byzantine behaviour.

In all fairness, there are several recent proposals achieving a similar byzantine-resilient latency. For example, both the proof-of-stake protocol ABC~\cite{sliwinski2019abc} or the FastPay protocol~\cite{baudet2020fastpay}. Similar to our own proposal, these systems do not solve consensus. Even more so, these systems do not even establish a total ordering of transactions at all. Nonetheless, the proposed setting exhibits great relevance in matching the circumstances found in transaction-based payment systems perfectly. More precisely, if a client knows all of its previously signed transactions and can thus easily provide the most recent transaction to the validator set, these systems only require the validators to verify a partial ordering of all their signed transactions. 
While \prot may also fail to order (and thus commit) conflicting client requests, it provides a total order of committed requests and is thus a suitable alternative for order-critic applications such as limited-resource allocation (e.g.\ whoever pays first wins).

Other attempts to increase the byzantine fault-tolerance in the ``normal operation'' mode of BFT protocols include Prime~\cite{amir2010prime} and FnF-BFT~\cite{avarikioti2020fnf-bft}; however, neither of the proposals evaluate the theoretical commit latency of a request and are thus orthogonal to \prot.



\section{Model}
\label{sec:model}

    We consider a distributed system consisting of $n$ nodes and a number of external clients issuing requests to the nodes.
    The \prot algorithm tolerates up to $f < \frac{n}{3}$ \textit{byzantine failures}, that is, malicious adversarial nodes; all other nodes are assumed to be \textit{correct} nodes following the protocol.
    Nodes must be authenticated to participate in the protocol, meaning that all nodes possess a cryptographic public-private key pair suitable for signing messages where the public keys are known to all nodes.
    We assume a Bitcoin-style UTXO transaction system \cite{delgado2018utxo}, in which a client's signature is sufficient to legitimate executing a client request.
    
    As an adversarial model, we assume that byzantine nodes may communicate arbitrarily among each other, thus acting as if they were controlled by a single global adversary with unified information. Byzantine nodes' computational power, however, is assumed to be polynomially bounded. In particular, we assume that byzantine nodes cannot forge cryptographic signatures of correct nodes.
    
    We demonstrate the safety of the \prot algorithm in the \textit{asynchronous model}, that is, assuming that messages will be delivered in a finite but unbounded time in arbitrary order. The liveness of the system is demonstrated in the \textit{partially synchronous model} \cite{dwork1988consensus} where synchronous (i.e.\ bounded-delay) periods may be assumed to occur eventually.

\paragraph*{Protocol Goals}{
    The \prot algorithm allows $n$ nodes to agree on an ordered log of transactions.
    Every node outputs a sequence of committed transactions $(pos,\ tx)$ with $pos \in \mathbb{N}$ in strictly incremental order, thereby establishing a permissioned ledger. On that ledger, we record signed transactions that are supplied by external clients.
    Two transactions can be either \textit{unrelated}, \textit{dependent}, or \textit{conflicting}. These predicates should be defined in consideration of the application context and will therefore be used as wild-cards in their intuitive sense.
    
    More formally, the \prot algorithm establishes a \textit{permissioned byzantine ledger}. A permissioned byzantine ledger satisfies the following properties:
    \begin{itemize}
        \item \textbf{Total Order:} If any correct node outputs transaction $tx_1$ before $tx_2$, then no correct node outputs $tx_2$ before $tx_1$.
        \item \textbf{Agreement:} All correct nodes commit the same transaction $tx$ at position $pos \in \mathbb{N}$.
        \item \textbf{Finality:} For every transaction $tx$ that is committed by at least one correct node, there is a certificate that proves the transaction will remain incontestably committed.
        \item \textbf{Inviolability:} No node can obtain a certificate to prove the commit of a transaction conflicting with any other committed transaction.
        \item \textbf{Liveness:} A transaction issued by a correct client that is received by at least one correct node will eventually be committed on the permissioned byzantine ledger.
    \end{itemize}
}

\section{\prot}

In this section, we first give a general overview of the \prot protocol and subsequently discuss the individual steps of the protocol in more detail. In addition, the \prot protocol is given in pseudo-code in \cref{alg:permit-bft}. Note that some implementation details were omitted for simplicity; for instance, nodes need to collect transactions and fetch unknown blocks (that are referenced by a \permit, \block or \proposal) asynchronously in the background, and \timeout messages will be answered if a node has already progressed to a newer round (see \Cref{sec:protocol:timeouts-failures}).

\SetKwInput{Input}{Input }
\SetKwInput{Output}{Output }
\SetNlSty{texttt}{}{:}
\SetAlFnt{\tt}
\newcommand{\textbftt}[1]{\textbf{\texttt{#1}}}
\SetKwSty{textbf}
\SetArgSty{texttt}
\SetFuncArgSty{texttt}
\SetProgSty{texttt}
\SetCommentSty{textnormal}
\DontPrintSemicolon
\SetKwIF{If}{ElseIf}{Else}{if}{$\!\!\!\!\!\;$\textnormal{\texttt{:}}}{else if}{else}{end if}
\SetKwSwitch{Switch}{Case}{Other}{switch}{$\!\!\!\!\!\;$\textnormal{\texttt{:}}}{case}{otherwise}{end case}{end switch}
\SetKwFor{ForAll}{forall}{$\!\!\!\!\!\;$\textnormal{\texttt{:}}}{end forall}
\SetKwFor{For}{for}{$\!\!\!\!\!\;$\textnormal{\texttt{:}}}{end for}
\SetKwFor{While}{while}{$\!\!\!\!\!\;$\textnormal{\texttt{:}}}{end while}
\SetKwIF{Wait}{WaitElseIf}{WaitElse}{wait}{$\!\!\!\!\!\;$\textnormal{\texttt{:}}}{wait}{wait}{end wait}
\SetKwIF{Unless}{UnlessElseIf}{UnlessCondition}{unless}{$\!\!\!\!\!\;$\textnormal{\texttt{:}}}{unless}{unless}{end unless}
\SetKwIF{RepeatFor}{RepeatElseIf}{Repeat}{repeat}{$\!\!\!\!\!\;$\textnormal{\texttt{:}}}{repeat}{repeat}{end repeat}
\SetKwIF{UntilIf}{Until}{UntilCondition}{until}{$\!\!\!\!\!\;$\textnormal{\texttt{:}}}{until}{until}{end repeat}
\SetKwIF{BreakIf}{BreakElseIf}{BreakCondition}{break if}{$\!\!\!\!\!\;$\textnormal{\texttt{:}}}{else break if}{break if}{end break if}
\SetKw{Break}{break}
\LinesNotNumbered

\begin{algorithm2e}
  \thispagestyle{empty}
  \Input{\begin{tabular}[t]{l@{ : }l}
    genesis & \textnormal{initial block of the block graph} \\
    nodes         & \textnormal{ordered set of \texttt{n} nodes}
  \end{tabular}}
  \Output{\begin{tabular}[t]{l}
    \textnormal{$\!\!\!\!$block graph with a unique set of totally-ordered committed blocks} 
  \end{tabular}}

  \medskip
  
  round = 0


  current = genesis \tcp*{position in the block graph for which to issue a \texttt{permit}}

    
  timeouts = \{$\,$\} \tcp*{dictionary mapping a \texttt{round $\mapsto$ set of timeout messages}}
    

  \Repeat{
  
    
    leader = round $\mathsf{mod}$ n \tcp*{select the round-robin leader for the round}
    
    
    Send permit(round, current) to the leader \label{alg:send-permit}\;
    
    
    \If{you are the leader}{
      
        
        permits = \{$\,$\} \tcp*{dictionary mapping a \texttt{position $\mapsto$ set of permits}} 
        
      
        \Repeat{
        
            
            Store newly received permits in the permits dictionary\;
            
            
            \If(\label{alg:enough-permits-block}){there is a position with at least 2f+1 permits}{
            
                proof = 2f+1 permits for position\;
                Send block(proof, position, transactions) to all nodes \label{alg:create-block}\;
                \Break
                
            }
            
            
            \If(\label{alg:creator-timeout}){a leader timeout occurs}{
                
                \If(\label{alg:enough-permits-proposal}){permits contains at least 2f+1 permits}{
                
                    position = minimal position that respects all permits\;
                    Send proposal(position, permits) to all nodes \label{alg:issue-proposal}\;
                }
                
                \Break\;
            }
            
            
            \If{received a block, proposal or 2f+1 timeout messages for any following round}{
                \Break \tcp*{must be behind, abort the leader phase}
            }
        }
    }


    \Repeat(\label{alg:repeat-collect-proposal-block}){
    
        
        \If(\label{alg:receive-result}\label{alg:receive-block}\label{alg:receive-proposal}){received a result (block or proposal) with result.round $\geq$ round}{
            
            round = result.round \label{alg:ffw-round-result} \tcp*{\makebox[0pt][l]{1. fast-forward to the latest round}\hphantom{3. stop accepting results for round}} 
            
            current = result.position \label{alg:update-current} \tcp*{\makebox[0pt][l]{2. update current position}\hphantom{3. stop accepting results for round}}
            
            \Break \tcp*{3. stop accepting results for round}
        }
        
        
        \If(\tcp*[f]{ensure round synchronization}){a round timeout occurs}{
        
            Send timeout(round) to all nodes \label{alg:send-timeout}\;
            
        }
        
        
        Store newly received timeout messages in the timeouts dictionary\;
        
        
        \If(\label{alg:enough-timeouts}){there is any following round with at least 2f+1 timeout messages}{
        
            round = maximum of all such rounds \label{alg:ffw-round-timeouts}\;
        
            Send 2f+1 timeout messages for round to all other nodes \label{alg:send-timeouts}\;
            
            \Break\;
            
        }
    }
    
    round = round + 1 \tcp*{end round}
    
  }
  
  
  \caption{\prot Algorithm}
  \label{alg:permit-bft}
\end{algorithm2e}

\paragraph*{Protocol Overview}{
    In \prot, the nodes agree on a common directed, acyclic \textit{block graph}, that is, they jointly create blocks with references to their parent blocks determining the graph topology. Initially, the system starts with a \textit{genesis block} that is known to all nodes and stores an ordering of the participating nodes, as well as all their public keys.
    
    In order to create new blocks, all nodes send \textit{permits} to a \textit{leader}, thereby permitting it to append a new block at a specific \textit{position} (i.e.\ its parent blocks) in the block graph. While the position of the new block is predetermined by the received permits, a leader may freely choose to include any non-conflicting transactions in the new block.
    
    If the leader creates a block in a timely manner, the nodes will issue new permits for the leader to append another block, thereby forcing it to append only to that newly created block. If the leader fails or progresses too slowly, the nodes issue \textit{timeout} messages and eventually move to the next leader. If there is disagreement about the position where the leader should append a new block, \prot allows a leader to \textit{propose} the creation of a block with multiple parent blocks, which allows the subsequent leader to unify the system.
    %
}

\paragraph*{Block Creation}{
    The \prot protocol is initialized with a genesis block that contains a fixed ordering of the participating nodes. All other blocks must be created by combining $(n-f)$ so-called \textit{permits} issued by the participating nodes.
    \begin{definition}
        \label{def:permit}
        A \permit is a tuple $\mathtt{(round,\ position)}_{\sigma_{issuer}}$ signed by its issuing node.
    \end{definition}
    Intuitively, a permit certifies that a node has endorsed the leader of the round to create a new block at the specified position.
    Each permit is bound to a specific position and block creator and permits can only be combined if they are issued for the same position and block creator. 
    \begin{definition}
        \label{def:proof}
         A \Proof is a set of $2f + 1$ permits (from distinct nodes) for the same position and block creator.
    \end{definition}
    In other words, a \Proof guarantees that the creator has been endorsed by a quorum of $2f + 1$ nodes for the specified position. As there are only $f$ byzantine nodes, observe that this must include at least $f + 1$ correct nodes, that is, a correct majority. Hence, a creator that receives $2f + 1$ permits for the same position is allowed produce a new block.
    \begin{definition}
        \label{def:block}
        A \block is a tuple $\mathtt{(proof,\ transactions)}_{\sigma_{creator}}$ signed by the creator that is determined by the proof.
    \end{definition}
    %
    Blocks contain an ordered set of non-conflicting transactions that each block creator collects at its own discretion. Correct nodes will simply add all received transactions in FIFO order to accomplish fairness.
    In this way, \prot establishes a directed acyclic graph of blocks rooted in the genesis block. However, as we want to obtain a chain of committed blocks -- a \textit{blockchain} -- blocks may not be created at arbitrary positions of the graph.
}

\paragraph*{Commit Rule}{
    Intuitively, the \prot algorithm needs to ensure that the graph of blocks allows to deduce a unique, totally-ordered path of committed blocks.
    \begin{definition}
        A block is said to be \textbf{committed} if there exists at least one child block. 
    \end{definition}
    We will show in \Cref{sec:analysis} that the structure of the block graph maintained by the \prot protocol indeed allows to deduce a unique, totally-ordered set of committed blocks, thus establishing a blockchain.
    Optimistically, a single round of the protocol proceeds as follows.
    \begin{enumerate}
        \item Whenever a new block is received, nodes issue permits to the next block creator to append from the new block.
        \item When the next creator collects a quorum of $2f + 1$ \permits (i.e.\ obtains a proof), it creates a new block including all non-conflicting transactions that it received so far.
    \end{enumerate}
    \begin{figure}[b]
        \centering
        \includegraphics{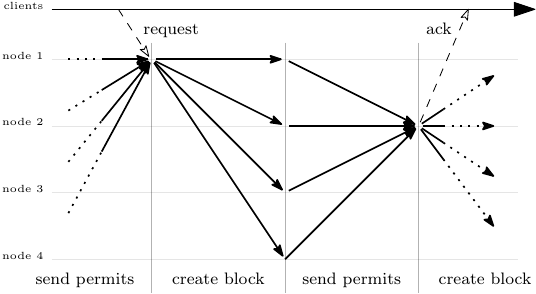}
        \caption{Schematic optimistic execution of the \prot algorithm, processing a new request.}
        \label{fig:message-schematic}
    \end{figure}
    In the best case, this procedure establishes a chain of blocks that may be interpreted as a linear log of transactions.
}

\paragraph*{Accepting New Blocks}{
    A correct node decides whether to accept a received block based on its local \current head of the block graph, that is, based on the most recent accepted position the node has stored locally. A new block can be accepted if it respects the \current position entirely:
    \begin{definition}
        \label{def:respects}
        We say that a position \textbf{respects} all blocks that lie on any path from the genesis block to the position. Furthermore, a position \textbf{respects} all 
        blocks that are in conflict with a committed block that is respected by the position (e.g.\ all uncommitted blocks that have depth at most position.depth - 2, counting from the genesis block).
        A block/permit/proposal \textbf{respects} all blocks that are respected by its associated position.
    \end{definition}
    In other words, a correct node will ensure that either its vote (in the form of a permit) is respected, or provably uncommittable.
}

\paragraph*{Block Creators}{
    In \prot, the nodes take turns creating new blocks by selecting a \textit{leader} as the current \textit{block creator}. The order in which they select the next leader is fixed in the genesis block, which is known to all participants from the start of the protocol; namely
    $$\mathtt{leader\ =\ round}\ \mathsf{mod}\ \mathtt{n}.$$
%
    In contrast to other BFT protocols, a leader in \prot does not propose the contents of a new block and subsequently collect signatures for its proposal. Instead, the backup nodes proactively issue permits for the current leader, allowing it to append any new block at the specified position in the block graph. In this way, the quorum of $(n - f)$ permits required to create a new block can be completed while simultaneously receiving new transactions, that can immediately be included into the block.
}

\paragraph*{Separation of Powers}{
    Put differently, a leader can freely decide \textit{which} transactions to include into a block it creates, and potentially even create multiple blocks -- a ``fork'' -- at the very same position in the block graph. However, each new block is bound to extend from the position specified by the set of collected permits. Hence, the leader does not have control \textit{where} to append a new block in the block graph, i.e.\ which previous blocks to commit, as the community of correct nodes remains in charge of the structure of the resulting block graph.
}

\paragraph*{Optimistic Operation}{
    Note that, optimistically, a non-conflicting transaction can be incorporated in a committed block within 2 rounds of communication (cf.\ \Cref{fig:message-schematic}). In the first round, it may already be included in a new block that is distributed to all nodes by the block creator. In the second round, the backup nodes issue permits that allow the subsequent block creator to create a child block, which may serve as an $ack$ (acknowledgment) message to the client. Hence,
    \begin{corollary}
        \prot attains a byzantine-resilient latency of 2 communication rounds.
    \end{corollary}
    Note that we assume transactions to be non-conflicting, which we ensure by employing a UTXO system.
}

\paragraph*{Resolving Disagreement}{
    \label{sec:protocol:disagreement}
    In reality, however, due to the participation of byzantine nodes and possibly unbounded network delays experienced in an asynchronous network, a few more scenarios have to be considered. For example, a byzantine creator may not create a \block at all, not distribute the created \block to all correct nodes, or even create several blocks at the specified \position in the block graph using the same \Proof.
    
    In any case, given the commit rule above that solely relies on the existence of a child \block, correct nodes cannot carelessly issue permits for different positions in each round. Correct nodes do, however, maintain a preferred position, their respective \current position, which will be updated in case of disagreement. The solution presented by \prot is to allow blocks to be created with multiple parent blocks in such a case.
    \begin{definition}
        \label{def:position}
        A \position is a set of blocks in the block graph.
    \end{definition}
    Note that this definition is required to allow blocks with multiple parent blocks, while remaining coherent with the use of the word position in the previous \Cref{def:permit,def:proof,def:respects}.
    
    Intuitively, in line with the commit rule, a \block respects all its ancestor blocks and blocks that have no chance of ever being committed. Now, for a block creator to demonstrate that there is disagreement among the nodes, we introduce so-called proposals:
    \begin{definition}
        \label{def:proposal}
        A \proposal is a tuple $\mathtt{(position,\ permits)}$ where \position is the smallest \position that respects all the positions with an issued \permit and \permits is a set of at least $2f + 1$ permits from distinct nodes.
    \end{definition}
    As soon as a node receives a valid \proposal, it updates its \current position by adopting the associated \position of the \proposal. In other words, \prot unifies the nodes' \current positions by confirming all positions that may possibly ever become committed by a child \block. We show in \Cref{sec:safety} that correct nodes can do so safely.
}

\paragraph*{Timeouts \& Failures}{
    \label{sec:protocol:timeouts-failures}
    If possible, a correct leader should always prefer to create a \block over a \proposal, as new blocks will ultimately drive progress. However, we can neither guarantee that a leader will receive the correct nodes' \permits first, nor that the leader will receive more than $n - f = 2f + 1$ permits in any given round due to $f$ possible byzantine failures. Therefore, \prot employs a \texttt{leader timeout} up to which a leader will wait before potentially issuing and broadcasting a \proposal.
    
    On the other hand, from the perspective of nodes that are not the designated leader of the round, it may be difficult to determine the end of a round. To that end, a (longer) \texttt{round timeout} is used. In order to guarantee sufficient synchronization among the nodes, each node broadcasts a \timeout message upon expiry of their local \texttt{round timeout}. A collection of $2f + 1$ \timeout messages is required to proceed to the next round without observing either a \block or a \proposal from the designated leader of this (or any following) round.
    
    Note that a byzantine leader or bad networking conditions may cause some correct node to incur a local \texttt{round timeout}, while other correct nodes did receive a \block or \proposal for the round. Thus, if a correct node receives a \timeout message for an ``old round'' (any round number smaller than its local \texttt{round} variable), the correct node answers by relaying the latest \block, \proposal, or $2f + 1$ \timeout messages that it has seen locally. Note that this has been omitted in \Cref{alg:permit-bft} for simplicity.
}

\paragraph*{Inferring the Transaction Log}{
    \label{sec:protocol:transactions}
    Similar to the Bitcoin protocol~\cite{nakamoto2008bitcoin}, we assume an ``Unspent Transaction Output'' (UTXO) \cite{delgado2018utxo} scheme which allows us to assume that conflicting transactions may only occur in case of malicious behavior. Note that two conflicting transactions may be included in different blocks that could become committed simultaneously (but not independently, see \Cref{sec:safety:transactions-with-conflicts}). To that end, let us define when a transaction itself is considered to be committed:
    \begin{definition}
        \label{def:commit-rule:transactions}
        A transaction is said to be \textit{committed} if it is contained in a committed \block~$b$ and there exists a child \block~$b'$ (a witness of the commitment of $b$) such that there are no conflicting transactions within the set of committed blocks that are respected by the child \block~$b'$.
    \end{definition}
    If any two transactions are conflicting, we cannot guarantee that none of these transactions have been committed by a (possibly hidden) \block. Hence, we can neither choose a single one of the transactions to be executed, nor can we drop the conflicting transactions entirely. Consequently, using the fact that conflicting transactions may only occur in case of malicious behavior, we \textit{freeze} the transactions. In the context of a cryptocurrency, for instance, this would imply locking the maximum amount spent across the set of pairwise conflicting transactions. This is necessary in order to guarantee that a committed transaction may be executed once its proof of commitment, a child \block only respecting a single one of the conflicting transactions, is revealed. In the worst case, this may lead to deadlocked funds; hence, immediately penalizing the transaction owner who acted maliciously.
    
    Note that the set of blocks that are respected by a \block at depth~$d$ is immutable. To see this, observe that a majority of correct nodes must have issued \permits for a position at depth~$d-1$ in order to create the \block. Hence, no new blocks can be created at depths smaller than $d-1$.
    \begin{corollary}
        Once a transaction is committed, it is irrevokably committed.
    \end{corollary}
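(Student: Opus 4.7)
The plan is to chain together the commit rule (\Cref{def:commit-rule:transactions}), the definition of \emph{respects}, and the immutability remark preceding the corollary. Fix a transaction $t$ that is committed at some point in time via a committed block $b$ together with a witnessing child block $b'$ whose set of respected committed blocks contains no transaction conflicting with $t$. I need to show that, as the block graph grows, (i) $b$ remains committed, (ii) $b'$ remains a witness, and (iii) no new committed conflicting transaction can sneak into the set of committed blocks respected by $b'$.

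First I would argue that $b'$ does not go away: blocks are append-only objects signed by their creator, so once $b'$ exists it stays in every honest node's view forever, and since $b$ was a parent of $b'$ at creation time, $b$ permanently has at least one child and hence remains committed by \Cref{def:commit-rule}. Next I would fix the depth $d := b'.\texttt{depth}$ and invoke the preceding remark: the set of blocks respected by $b'$ is already determined at the moment of $b'$'s creation, because any future block can only be created at depth $\geq d$ (its proof requires $2f+1$ permits at a position of depth $\geq d - 1$, so $b'$ cannot become a path-ancestor of something earlier), and the ``uncommitted blocks of depth at most $b'.\texttt{depth} - 2$'' clause in the definition of \emph{respects} is monotone in the wrong direction for new blocks to enter: a block that becomes \emph{newly} committed later was, at the time $b'$ was created, uncommitted and of depth $\geq d - 1$, so it fails the depth-at-most-$d-2$ test and never enters $b'$'s respected set.

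With that in hand, the set of committed blocks respected by $b'$ is a monotone non-growing set of candidates, and since no new block enters, no new committed conflicting transaction can appear in it. Any previously-existing block respected by $b'$ was already considered at the moment of commitment and, by hypothesis, contained no transaction conflicting with $t$. Thus the witness condition of \Cref{def:commit-rule:transactions} continues to hold, and $t$ stays committed.

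The main obstacle, and the step I would write out most carefully, is exactly the argument that ``new committed blocks cannot enter $b'$'s respected set''. The two sub-cases to handle are path-ancestry (ruled out because $b'$'s proof already fixed the $2f+1$ permits identifying the path-ancestors at depths $<d$, and by the immutability remark no further blocks can be produced at those depths) and the ``uncommitted blocks at depth $\leq d-2$'' clause (ruled out because any such block was necessarily produced \emph{before} $b'$ and therefore was already in the set at commitment time, using again that no block of depth $\leq d-2$ can be created after a block at depth $d$ exists). Once that monotonicity is established, the corollary follows immediately.
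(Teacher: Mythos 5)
Your route is the same one the paper takes: the corollary is derived there solely from the remark immediately preceding it, namely that the set of blocks respected by a \block{} at depth~$d$ is immutable because a majority of honest nodes issued \permits{} for a position at depth~$d-1$, so no new blocks can be created at depths smaller than $d-1$. Your points (i), (ii) and the path-ancestry half of (iii) are a faithful elaboration of that remark.

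The gap sits exactly in the step you announce you would write out most carefully. From ``no new blocks at depth smaller than $d-1$'' you silently upgrade to ``any future block can only be created at depth $\geq d$'', and hence to ``a block that becomes newly committed later had depth $\geq d-1$''. The cited remark only gives new blocks depth $\geq d-1$, and therefore only gives depth $\geq d-2$ for a block that acquires a new child: a block~$c$ of depth exactly $d-2$, uncommitted when $b'$ was created, satisfies your depth-at-most-$(d-2)$ clause and is thus respected by~$b'$, yet nothing in the remark forbids a later block of depth $d-1$ having $c$ among its parents (e.g.\ a multi-parent position containing $c$ together with a deep block), which would commit~$c$ and could place a conflicting transaction into the set of committed blocks respected by~$b'$. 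Your fallback sentence --- that any previously existing block respected by $b'$ ``was already considered at the moment of commitment'' --- does not cover this case either, because the hypothesis of \Cref{def:commit-rule:transactions} constrains only the \emph{committed} blocks respected by $b'$, not the respected-but-uncommitted ones. To close the argument you must either establish the stronger bound you actually used (no new blocks at depth smaller than $d$, equivalently use the threshold $d-3$ that results from reading a \block{}'s \emph{respects}-clause through its parent position at depth $d-1$), or observe that, when ``committed'' and \emph{respects} are evaluated at the same time, the committed blocks respected by $b'$ can only ever be $b'$'s path-ancestors (any respected non-ancestor is by definition uncommitted), a set that is fixed at $b'$'s creation; either repair yields the corollary, but as written the monotonicity claim at the heart of (iii) is not justified.
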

    Hence, as soon as a transaction is committed, it may be safely executed (see \Cref{sec:safety:transactions-with-conflicts} for details). So, how do we infer a linear, totally-ordered log of transactions from the resulting block graph? \prot includes transactions into the ledger in two steps. At first, it is ensured that no two conflicting transactions become committed simultaneously which makes them safe for execution. In a second step, the definitive ordering in the ledger is finalized.
    
    In other words, \prot creates a totally-ordered ledger with a set of unordered, non-conflicting committed transactions at its head. These transactions are yet to be put into their definitive order but can already be executed.
}

\section{Security Analysis}
\label{sec:analysis}


\subsection{Safety}
\label{sec:safety}
As for every distributed system, it is crucial to guarantee the safety of the \prot algorithm independent of both the state of the system and any networking assumptions (i.e.\ in the asynchronous model). By \textit{safety} of the \prot algorithm, we denote that two conflicting transactions may not become committed simultaneously. To that end, we first analyze the structural properties of the block graph and show that there cannot be any two conflicting but independently committed blocks. Subsequently, in \Cref{sec:safety:transactions-with-conflicts}, we demonstrate that this property ensures the transaction safety of the \prot algorithm.

Formally, to accommodate for arbitrary networking conditions, we base our analysis on the \textit{asynchronous model} throughout this section. That is, we assume that messages will be delivered within a finite but unbounded time, without message loss. In particular, the order in which messages are received may be arbitrary (even from the same sender).

Intuitively, the safety of the \prot algorithm is maintained by the correct nodes who do not switch their respective \current position carelessly. Each node, individually, will decide on one of four possible results for each round before progressing to any following round: either accepting a \block or a \proposal
, skipping to the next round with a \timeout
, or by fast-forwarding to a following round
.

We thus show that a node may safely accept a \block, a \proposal or transition to the next round without updating its \current position. The basic idea of these proofs will be that either a newly accepted \block (\proposal) was supported by at least $f + 1$ correct nodes, which makes it safe, or it was supported by less than $f + 1$ correct nodes, which makes it harmless to the safety of the system as no quorum of $2f + 1$ \permits could then be achieved.

\subsubsection{Safety of Blocks}
To begin with, we consider the main backbone of the \prot algorithm: the blocks. 
\begin{definition}
    A \block is said to be \textit{promised}, if in any round up to (and including) the latest round, at least $f + 1$ correct nodes issued a permit for a common position containing the \block.
\end{definition}
Note that this is a required criterion for a block to become committed. Hence,
\begin{corollary}
    \label{lem:protocol:committed-only-if-promised}
    If a \block is committed, it is also promised.
\end{corollary}
\begin{proof}
    To commit a \block, a child \block must be created. Let round~$i$ be the first round when such a child \block was created. Thus, there must have been $2f + 1$ \permits for a particular position containing the \block in round~$i$. Given that at most $f$ of these \permits may be issued by byzantine nodes, at least $f + 1$ correct nodes must have issued a \permit for the same particular position in round~$i$. Hence, the \block is promised.
\end{proof}
With this, we can now specify when a round is safe.
\begin{definition}
    \label{def:safe-permit}
    A \textit{safe} \permit is a \permit for a position respecting all promised blocks. 
    A round is called \textit{safe} if at least $f + 1$ correct nodes would only cast a safe \permit.
\end{definition}
Note that a safe round is defined based on the potentially issued \permits of correct nodes. This way, we can reason about rounds in which the network synchronization does not allow us to assume that a correct node will actually issue a \permit for some round~$i$. However, to produce a valid \block or \proposal, any creator node must have received sufficiently many, that is, at least $2f + 1$ \permits (whereof at least $f + 1$ must be correct \permits).

\begin{lemma}
    \label{lem:safety:blocks-respect-promised}
    If round~$i$ is safe, a \block created in round~$i$ respects all promised blocks.
\end{lemma}
\begin{proof}
    A block must be created with $2f + 1$ \permits from the same round~$i$ for the exact same position~$p$
    . Hence, there must have been at least $f + 1$ correct \permits for position~$p$ in round~$i$. Given that at least $f + 1$ correct nodes would have only issued a safe \permit in round~$i$, we conclude that position~$p$ respects all promised blocks.
\end{proof}
Furthermore, we show that the safety invariant is maintained if at least $f + 1$ correct nodes accept the same \block.
\begin{lemma}
    \label{lem:safety:safely-accept-block}
    If round~$i$ is safe and a \block is accepted by at least $f + 1$ correct nodes, then round~$i + 1$ is safe.
\end{lemma}
\begin{proof}
    If there is a \block~$b$ created in round~$i$ that is accepted by at least $f + 1$ correct nodes, then \block~$b$ will become promised in round~$i + 1$. Since at least $f + 1$ correct nodes have accepted the \block, however, at least $f + 1$ correct nodes will cast a \permit for the \block~$b$ in round~$i + 1$. Note that these are safe \permits as there is at most one \block created in round~$i$ that may be accepted by at least $f + 1$ correct nodes as there are only $2f + 1$ correct nodes. Similarly, no \proposal may be issued and accepted by at least $f + 1$ correct nodes in round~$i$.
\end{proof}

\subsubsection{Safety of Proposals}
\label{sec:safety:proposals}
Next, let us have a closer look at the role of a \proposal in the \prot algorithm. In essence, in a safe round, a \proposal is a certificate that no promised block may be excluded from the \proposal's position. Hence, we show that it is always safe for any correct node to accept a \proposal.

As a preliminary step, we show that a \proposal's position respects all blocks that could have become promised so far, assuming that the correct nodes properly maintained the safety invariant of the \prot algorithm thus far.
\begin{lemma}
    \label{lem:safety:proposals-encompass}
    If round~$i$ is safe, a \proposal in round~$i$ respects all promised blocks.
\end{lemma}
\begin{proof}
    A \proposal must include $2f + 1$ \permits. At most $\;n - (2f + 1) = f\;$ correct nodes' \permits could have been excluded from the \proposal. Hence, the \proposal must include at least one of the $f + 1$ correct nodes' safe \permits and thus respect all promised blocks.
\end{proof}
Subsequently, we show that the safety invariant is maintained if at least $f + 1$ correct nodes accept the same \proposal.
\begin{lemma}
    \label{lem:safety:safely-accept-proposal}
    If round~$i$ is safe and a \proposal is accepted by at least $f + 1$ correct nodes, then round~$i + 1$ is safe.
\end{lemma}
\begin{remark} 
    For simplicity we assume that a correct node that does not actively see a \proposal, but for arbitrary reasons will issue a \permit for the same position in round~$i + 1$, is also considered to have accepted the \proposal (implicitly).
\end{remark}
\begin{proof} 
    If there is a \proposal issued in round~$i$ that is accepted by at least $f + 1$ correct nodes, the block(s) at the associated position~$p$ will become promised latest in round~$i + 1$. Since at least $f + 1$ correct nodes have accepted the \proposal, however, at least $f + 1$ correct nodes will issue a \permit for the position~$p$ in round~$i + 1$. Note that the position~$p$ is safe as it respects all previously promised blocks (\Cref{lem:safety:proposals-encompass}) and there is at most one \proposal issued in round~$i$ that may be accepted by at least $f + 1$ correct nodes as there are only $2f + 1$ correct nodes. Similarly, no \block may be created and accepted by at least $f + 1$ correct nodes in round~$i$.
\end{proof}

\subsubsection{Safety of other Rounds}
Finally, let us consider the case when no \block or \proposal is accepted by at least $f + 1$ correct nodes in some round. To begin with, we observe that no new \block may become promised in the subsequent round:
\begin{lemma}
    \label{lem:safety:combined-no-newly-promised}
    If round~$i$ is safe and there is no \block or \proposal that is accepted by at least $f + 1$ correct nodes, then no new \block may become promised by the \permits in round~$i + 1$.
\end{lemma}
\begin{proof}
    We prove the \namecref{lem:safety:combined-no-newly-promised} by demonstrating that there is no position receiving at least $f + 1$ correct \permits for the first time in round~$i + 1$. Any new \block is accepted by less than $f + 1$ correct nodes and there cannot be any other correct \permit for such new blocks in round~$i + 1$. Otherwise, correct nodes only update their \current position according to a received \proposal. Assuming that there is no \proposal that is accepted by at least $f + 1$ correct nodes, that is, whose position is adopted by at least $f + 1$ correct nodes, there cannot be a position receiving at least $f + 1$ correct \permits.
\end{proof}
Thus, it remains to show that all the actions taken by correct nodes in such a round will maintain the safety invariant.
\begin{lemma}
    \label{lem:safety:safely-accept-combined}
    If round~$i$ is safe and there is no \block or \proposal that is accepted by at least $f + 1$ correct nodes, then round~$i + 1$ is safe.
\end{lemma}
\begin{proof}
    Following \Cref{lem:safety:combined-no-newly-promised},  it remains to show that at least $f + 1$ correct nodes' \permits remain safe \permits. Without loss of generality, let $v$ be one of the $f + 1$ correct nodes who cast a safe \permit in round~$i$.
    
    
    On the one hand, observe that each correct node transitioning to the next round without updating its \current position would only cast a \permit for the same position again. In other words, $v$ would only cast a safe \permit as its \permit still respects all previously promised blocks and there are no newly promised blocks in round~$i + 1$.
    
    On the other hand, a correct node would only update its \current position by accepting a \block or a \proposal
    . In any case, the updated \current position of a correct node respects all previously promised blocks (\Cref{lem:safety:blocks-respect-promised,lem:safety:proposals-encompass}). Hence, as there are no newly promised blocks in round~$i + 1$, we conclude that the correct node~$v$ would only cast a safe \permit for round~$i + 1$.
\end{proof}

\subsubsection{Safety Invariant}
Ultimately, we may combine the previous statements to conclude \Cref{thm:safety:safety-invariant}, namely that every round is a safe round:
\begin{theorem}
    \label{thm:safety:safety-invariant}
    In any round, a majority of correct nodes would only issue a safe \permit.
\end{theorem}
\begin{proof}
    We show this statement by induction on the round number~$i$. For $i = 0$, observe that the statement holds as all correct nodes start round 0 by voting for the \texttt{genesis} \block, which is also the only \block that may become promised in this round. From now on, assume that some round $i \geq 0$ is safe. We thus have to show that round~$i + 1$ is also safe.
    
    If there is a \block or \proposal that is accepted by at least $f + 1$ correct nodes, then round~$i + 1$ is safe by \Cref{lem:safety:safely-accept-block,lem:safety:safely-accept-proposal}. Otherwise, \Cref{lem:safety:safely-accept-combined} yields the desired statement.
\end{proof}
In other words, in every round, at least $f + 1$ correct nodes' \permits ensure that all promised blocks are contained in a newly promised position.

In the following, we demonstrate what implications \Cref{thm:safety:safety-invariant} imposes on the set of promised blocks:
\begin{definition}
    We say that two blocks $b_1$ and $b_2$ are \textit{conflicting} if neither $b_1$ respects $b_2$, nor $b_2$ respects $b_1$.
\end{definition}
\begin{definition}
    We say that two blocks $b_1$ and $b_2$ become \textit{promised independently} if there are two separate positions, position $p_1$ respecting $b_1$ but not $b_2$ and position $p_2$ respecting $b_2$ but not $b_1$, such that at least $f + 1$ correct \permits were issued for $p_1$ and $p_2$  in some rounds $i$ and $j$.
    
    Furthermore, we say that two blocks become \textit{committed independently}, if there are two such positions where blocks were created.
\end{definition}
\begin{lemma}
    \label{lem:safety:no-two-conflicting-blocks}
    Two conflicting blocks cannot become promised independently.
\end{lemma}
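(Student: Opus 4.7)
The plan is a proof by contradiction that applies \Cref{thm:safety:safety-invariant} to the later of the two rounds witnessing the alleged independent promises. Suppose, toward a contradiction, that conflicting blocks $b_1$ and $b_2$ are promised independently via separate positions $p_1 \neq p_2$ and rounds $i$ (for $p_1$) and $j$ (for $p_2$), with $p_1$ respecting $b_1$ but not $b_2$ and $p_2$ respecting $b_2$ but not $b_1$. Without loss of generality $i \leq j$, and I would split on whether the two rounds coincide.

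If $i = j$, a counting argument suffices: each honest node issues at most one permit per round, and since $p_1 \neq p_2$, the $f + 1$ honest nodes permitting $p_1$ must be disjoint from the $f + 1$ honest nodes permitting $p_2$ in round $i$. Together they would amount to at least $2f + 2$ honest participants, contradicting the fact that only $n - f = 2f + 1$ honest nodes exist.

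If $i < j$, then by the time round $j$ begins, $b_1$ is already promised by the $f + 1$ honest permits for $p_1$ issued in round $i$. Applying \Cref{thm:safety:safety-invariant} to round $j$, a majority (hence at least $f + 1$) of the $2f + 1$ honest nodes only issue safe permits, which by \Cref{restate:safe-permit} must respect every promised block and in particular $b_1$. Since $p_2$ does not respect $b_1$, no safe permit can be a permit for $p_2$, leaving at most $f$ honest nodes to account for the $f + 1$ honest permits for $p_2$ required by the definition of independent promising — another contradiction. The main obstacle in this argument is of course \Cref{thm:safety:safety-invariant} itself, whose justification is deferred to \Cref{sec:safety-invariant}; once the invariant is in hand, the present lemma reduces to the clean two-case analysis above.
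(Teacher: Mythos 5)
Your proof is correct and follows essentially the same route as the paper: apply \Cref{thm:safety:safety-invariant} to the later round~$j$ and count that at most $2f+1-(f+1)=f$ honest nodes could issue permits for the position that excludes the earlier-promised block, contradicting the $f+1$ honest permits required. The only difference is that you split off the case $i=j$ and settle it by disjointness of the two permit sets within one round, whereas the paper runs the same counting argument uniformly for all $i\leq j$; your extra case is harmless (and arguably slightly more careful about a block becoming promised only in round~$j$ itself).
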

\begin{proof}
    We prove the statement by contradiction. Without loss of generality, assume that \block~$b_i$ becomes promised in round~$i$ and \block~$b_j$ in round~$j$, where $i \leq j$. By \Cref{thm:safety:safety-invariant}, however, at least $f + 1$ correct nodes would only cast a safe \permit in round~$j$. Thus, at least $f + 1$ correct nodes would issue a \permit for a position respecting \block~$b_i$ in round~$j$. As there are only $2f + 1$ correct nodes, there are at most $2f + 1 - (f + 1) = f$ correct nodes who issue a \permit for a position respecting the \block~$b_j$ but not the \block~$b_i$, which is not sufficient for the \block~$b_j$ to become promised in round~$j$. This contradicts the assumption that two independently promised blocks exist.
\end{proof}
The pairwise application of \Cref{lem:safety:no-two-conflicting-blocks} on a set of conflicting blocks in combination with \Cref{lem:protocol:committed-only-if-promised} yields the following \namecref{cor:safety:only-one-conflicting-block}.
\begin{corollary}
    \label{cor:safety:only-one-conflicting-block}
    Out of a set of conflicting blocks, only one block may be independently committed by a child block.
\end{corollary}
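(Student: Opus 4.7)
The plan is a direct combination of the two preceding results, following the hint in the paragraph that introduces the corollary. I would argue by contradiction: assume a set $S$ of pairwise conflicting blocks contains two distinct blocks $b_1, b_2 \in S$ that are both independently committed by child blocks. My goal is then to extract a contradiction with \Cref{lem:safety:no-two-conflicting-blocks}.

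First, I would unpack the hypothesis. Because $b_1$ and $b_2$ are \emph{independently} committed, by the definition of independent commitment there exist positions $p_1$ and $p_2$ at which child blocks were created, where $p_1$ respects $b_1$ but not $b_2$ and $p_2$ respects $b_2$ but not $b_1$. Creation of a child block at each of these positions requires a proof, i.e.\ $2f + 1$ \permits{} for that position, so at least $f+1$ honest nodes issued a \permit{} for $p_1$ in the relevant round and at least $f+1$ honest nodes issued a \permit{} for $p_2$ in its round. In particular, \Cref{lem:protocol:committed-only-if-promised} applies to both: $b_1$ and $b_2$ are promised, and the witnessing positions $p_1$ and $p_2$ for promise coincide with the ones used for commitment, so $b_1$ and $b_2$ are promised independently.

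Next, since $b_1$ and $b_2$ belong to a set of pairwise conflicting blocks, neither respects the other, so $b_1$ and $b_2$ are conflicting in the sense of the definition preceding \Cref{lem:safety:no-two-conflicting-blocks}. Applying that lemma to the pair $(b_1, b_2)$ contradicts the independent promise just established, completing the proof.

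The argument is essentially bookkeeping: the only thing to be careful about is verifying that ``independently committed'' implies ``independently promised'' with the \emph{same} witnessing positions, which is immediate from the two definitions since any proof attesting to the creation of a child block at position $p$ is simultaneously a witness that at least $f+1$ honest \permits{} were issued for $p$. I do not anticipate any serious obstacle; the work has already been done in \Cref{lem:protocol:committed-only-if-promised} and \Cref{lem:safety:no-two-conflicting-blocks}, and this corollary just composes them and observes that ``pairwise conflicting'' is exactly the input hypothesis \Cref{lem:safety:no-two-conflicting-blocks} needs.
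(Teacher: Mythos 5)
Your proof is correct and follows essentially the same route as the paper, which derives the corollary exactly by combining \Cref{lem:protocol:committed-only-if-promised} with a pairwise application of \Cref{lem:safety:no-two-conflicting-blocks}; your careful check that independent commitment (via the $2f+1$-permit proof of a child block) yields independent promise at the same witnessing positions is precisely the bookkeeping the paper leaves implicit.
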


\subsubsection{Transactions with Conflicts}
\label{sec:safety:transactions-with-conflicts}
Recall that each \block comes with a set of transactions. As outlined in \Cref{sec:protocol:transactions}, we assume a UTXO system where conflicting transactions are, without exceptions, proof of malicious behaviour.

Following \Cref{def:commit-rule:transactions}, a transaction is said to be \textit{committed} if it is contained in a committed \block and there exists a child \block such that there are no conflicting transactions within the set of committed blocks that are respected by the child \block.

We establish the transaction safety property mentioned above and thereby guarantee that double-spending is impossible. To that end, we argue for the given transaction commit rule that it applies to at most one transaction out of a set of conflicting transactions.
\begin{theorem}[safety]
    Two conflicting transactions cannot become committed simultaneously.
\end{theorem}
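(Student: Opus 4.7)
The plan is to argue by contradiction: assume that two conflicting transactions $t_1$ and $t_2$ both become committed, and derive a contradiction by combining \Cref{def:commit-rule:transactions} with \Cref{cor:safety:only-one-conflicting-block}. Unrolling the commit definition for each transaction, I obtain committed blocks $b_1 \ni t_1$ and $b_2 \ni t_2$ together with child \block{} witnesses $b_1'$ and $b_2'$, where the commit rule requires that no \block{} respected by $b_i'$ contains a transaction that conflicts with $t_i$.

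The central step is to show that at least one witness must respect the other transaction's block; by symmetry, say $b_1'$ respects $b_2$. I would establish this via a short case split on the relationship between $b_1$ and $b_2$ in the block graph. If $b_1 = b_2$, then $b_1'$ is a child of $b_1 = b_2$, so it respects $b_2$ directly. If $b_1$ and $b_2$ are comparable (one respects the other), then propagation of the respects-relation along the ancestor chain immediately yields the claim. The interesting case is when $b_1$ and $b_2$ are conflicting blocks (neither respects the other): if it were the case that $b_1'$ did not respect $b_2$ and $b_2'$ did not respect $b_1$, then $b_1'$ and $b_2'$ would witness exactly the configuration of \emph{independently committed} blocks ruled out by \Cref{cor:safety:only-one-conflicting-block}; hence at least one of them must respect the other's block.

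Once I have $b_1'$ respecting $b_2$, the contradiction closes in one line: $b_2$ is a committed \block{} that is respected by the witness $b_1'$ and contains $t_2$, which conflicts with $t_1$; this directly violates the commit condition for $t_1$ witnessed by $b_1'$. I expect the main obstacle to be the conflicting-blocks case: one has to verify carefully that the second clause of the \position{}-respects definition (covering uncommitted blocks at shallow depth) cannot affect the argument, since $b_1$ and $b_2$ are assumed committed, and one must also check that the positions associated with the two child witnesses genuinely instantiate the notion of "independent commitment" invoked by the corollary, rather than some weaker structural relation.
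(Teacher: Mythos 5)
Your proposal is correct and follows essentially the same route as the paper's proof: both reduce the statement to the block level via \Cref{def:commit-rule:transactions}, split on whether the two containing blocks are identical/comparable or conflicting, handle the conflicting case with \Cref{cor:safety:only-one-conflicting-block}, and close the remaining cases by noting that a witness respecting the other committed block violates the transaction commit rule. Your reformulation of the case analysis as ``at least one witness respects the other's block'' is just a slightly more explicit packaging of the paper's argument, not a different approach.
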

\begin{proof}
    To begin with, note that two conflicting transactions may only become committed simultaneously as part of independently committed blocks. In other words, two conflicting transactions cannot be committed jointly within a single \block or within a set of blocks that become committed as a position together. To see this, recall that the commit rule explicitly requires that there is no conflicting transaction within the set of respected blocks of some child \block.
    
    We assume for contradiction that two conflicting transactions may become committed simultaneously within independently committed blocks. Hence, either the blocks are conflicting or not. If the blocks are conflicting, \Cref{cor:safety:only-one-conflicting-block} guarantees that at most one such conflicting \block may be committed independently. If the blocks are not conflicting, then one of the blocks respects the other and thus the transaction commit rule only applies to the transaction within the respected \block.
\end{proof}

\subsection{Liveness}
\label{sec:liveness}
By \textit{liveness}, we denote the property that a distributed system is guaranteed to make progress as soon as the network reaches a state satisfying certain timing assumptions. More specifically, to \textit{make progress} means that outstanding transactions will be included in the ledger and irrevokably committed.

\subsubsection{Timing}
\label{sec:liveness:timing}
First, note that we have previously analyzed the safety of the \prot algorithm without the need to consider any timing information (i.e.\ in the asynchronous model). For \prot's liveness, however, timing is important. For the purpose of this analysis, we assume a global perspective with full information about the state of the entire system
at any considered time~$t$.

Throughout this \namecref{sec:liveness}, we assume the \textit{partially synchronous model} \cite{dwork1988consensus}, that is, we assume that phases of synchronous operation do occur.
During a \textit{phase of synchronous operation}, we assume that messages among correct nodes will be received within a known maximum message delay $\Delta$. Furthermore, we assume that local timers can reliably count the maximum message delays $\Delta$. In particular, we assume that
\begin{equation}
    \label{eq:timeouts}
    2 \Delta \; < \; \texttt{creator timeout} \; < \; 3 \Delta \qquad \text{and} \qquad 5 \Delta \; < \; \texttt{round timeout}
\end{equation}
hold reliably for all correct nodes, where \texttt{ creator timeout } and \texttt{ round timeout } are the actual times for the respective local timeouts to occur. 
Local computation is considered to be negligible.

\subsubsection{Liveness Guarantees}
In \Cref{sec:basic-liveness}, we demonstrate the following basic liveness guarantee for the \prot algorithm:
\begin{restated}{theorem}{restate:block-liveness}
    \label{thm:liveness:block-liveness}
    Assuming that sufficiently long 
    periods of synchrony do occur, the \prot algorithm will create and append new blocks to the ledger.
\end{restated}
At this point, it remains to conclude that new transactions will eventually be included in the ledger and irrevokably committed.
\begin{theorem}[liveness]
    A transaction without conflicts will eventually become executed
    by all correct nodes.
\end{theorem}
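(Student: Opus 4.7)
The plan is to combine the block-liveness guarantee of \Cref{thm:liveness:block-liveness} with the transaction commit rule of \Cref{def:commit-rule:transactions} and the irrevocability corollary. The argument proceeds in three stages: first showing that the transaction eventually appears in some block, then showing that this block becomes committed with a witness child block that triggers the transaction commit rule, and finally invoking irrevocable commitment to conclude that every honest node executes the transaction.

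For the first stage, I would fix a conflict-free transaction $t$ and consider the infinite sequence of blocks that \Cref{thm:liveness:block-liveness} guarantees will be appended during sufficiently long synchronous periods. Since block creators rotate round-robin and by assumption at least $2f + 1$ honest nodes participate, infinitely many honest block creators will eventually produce blocks. Given that honest creators include all non-conflicting pending transactions they know of (cf.\ \Cref{sec:protocol:optimistic-operation}) and that transactions are gossiped asynchronously in the background, some honest creator will eventually include $t$ in a block~$b$ that it broadcasts to all nodes.

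For the second stage, I would apply \Cref{thm:liveness:block-liveness} once more starting from the round in which $b$ was created: further blocks will continue to be appended, and the next block~$b'$ that is produced after $b$ becomes known will either be a child of $b$ directly, or will be a block whose position respects $b$. In either case $b$ is committed by \Cref{def:commit-rule}. For the transaction commit rule to apply, the witness child~$b'$ must not respect any committed block containing a transaction conflicting with $t$; but by assumption no such conflicting transaction exists anywhere in the system, so this condition is trivially satisfied. Hence $t$ is committed in the sense of \Cref{def:commit-rule:transactions}, and by the irrevocability corollary this commitment is permanent. All honest nodes, upon observing $b$ together with the witness child~$b'$ (which they will through continued dissemination during the synchronous phase), can safely execute $t$.

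The main obstacle is the first stage: guaranteeing that some honest creator actually gets the opportunity to include $t$ in a block and that this block reaches the other honest nodes. Byzantine creators may refuse to include $t$, may create blocks that omit $t$, or may produce proposals instead of blocks. The argument therefore relies on the facts that (i) \Cref{thm:liveness:block-liveness} produces infinitely many blocks over sufficient synchronous periods, (ii) out of every $n$ consecutive rounds at most $f < n/3$ creators are byzantine, so infinitely many honest creators get a turn, and (iii) synchrony ensures that pending non-conflicting transactions propagate to these honest creators before their rounds begin, so they will include $t$ in their freshly created block.
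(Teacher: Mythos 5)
Your first stage matches the paper: \Cref{thm:liveness:block-liveness} plus the fact that honest creators include every known non-conflicting transaction yields a \block{}~$b$ containing the transaction~$t$ that is received and accepted by all honest nodes during a synchronous phase. The gap is in your second stage. You claim that the next \block{}~$b'$ produced after $b$ is known ``will either be a child of $b$ directly, or will be a block whose position respects $b$,'' and that ``in either case $b$ is committed.'' Neither half of this is justified as stated. First, respecting does not imply commitment: \Cref{def:commit-rule} requires an actual child of $b$, whereas a \position{} may respect $b$ through the depth clause of the respect relation (uncommitted blocks of depth at most \position{}\texttt{.depth - 2}) without $b$ having any descendant at all, so ``$b'$ respects $b$'' alone commits neither $b$ nor~$t$. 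Second, and more importantly, you give no argument for why later blocks or proposals must respect $b$ at all; a byzantine creator in a round after $b$ could, a priori, steer the honest nodes' \current{} positions away from $b$, in which case $t$ would never be committed. This is exactly where the paper invokes the safety machinery: since all honest nodes accept $b$ and issue \permits{} for it, $b$ becomes promised, and \Cref{thm:safety:safety-invariant} together with \Cref{lem:safety:blocks-respect-promised,lem:safety:proposals-encompass} guarantees that every subsequent \block{} or \proposal{} extends from $b$; a second application of \Cref{thm:liveness:block-liveness} (three consecutive honest creators, using \Cref{cor:liveness:block-if-unanimity} and \Cref{lem:liveness:block-or-proposal-received}) then produces a descendant of $b$, so $b$ acquires a child and the commit rule fires.

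Your handling of the witness condition (no conflicting transaction exists anywhere, so the condition in \Cref{def:commit-rule:transactions} is vacuous) and the appeal to irrevocable commitment are consistent with the paper. The proposal is therefore salvageable, but only after you add the promised-block / safety-invariant argument that forces all future blocks to extend from $b$, and replace ``respects $b$'' by ``has $b$ as an ancestor'' when concluding that the commit rule of \Cref{def:commit-rule} is satisfied.
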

\begin{proof}
    A correct creator includes all collected non-conflicting transactions in any block it creates
    . Hence, following \Cref{thm:liveness:block-liveness}, a non-conflicting transaction will be included in a block during a sufficiently long phase of synchronous operation. Furthermore, the created block will be received and accepted by all correct nodes before their \texttt{round timeouts} expire. All correct nodes will subsequently issue \permits to append new blocks from the \block containing the non-conflicting transaction. Following \Cref{lem:safety:blocks-respect-promised} and \Cref{thm:safety:safety-invariant}, all future blocks will thus extend from the \block containing the non-conflicting transaction. Upon another period of synchronous operation (possibly within the same period), three consecutive correct creators will create another block by \Cref{thm:liveness:block-liveness}. Thereby, the commit rule is satisfied for the non-conflicting transaction and it will be executed by all correct nodes upon receiving the new \block.
\end{proof}

\section{Message Complexity}
\label{sec:message-complexity}
Optimistically, the \prot algorithm requires the nodes to send $\mathcal{O}(n)$ messages to create a block (see \Cref{fig:message-schematic}). Note that the message lengths can be optimized using threshold cryptography (as suggested in \cite{gueta2019sbft,yin2019hotstuff}). 

In both the cases of a byzantine block creator or a network failure, however, the nodes fall back to a round of any-to-any communication for a synchronization step, incurring a message complexity of $\theta(n^2)$. Observe that this inflated communication pattern is mainly used in order to tolerate even a network split over an unbounded period of time. Such extreme network conditions should be very rare in practice. We thus suggest that such drastic measures may be employed only every $n$ 
rounds, while keeping the other rounds synchronized based on the local clocks. We thus expect a resulting communication complexity of $\mathcal{O}(n)$ messages.

Note that in signal processing, e.g.\ GPS localization, low-drift oscillators with $\pm0.5\,ppm$ (less than $0.0001\,\%$ or $1 \frac{\mu s}{s}$) are common \cite{oscillator-datasheet}. Thus, even with hundreds of nodes in the system, it should be possible to maintain a sequence of unified rounds over a course of $n$ rounds by increasing the maximum message delay by a small factor in each round after a synchronization step. The careful analysis and implementation of such optimizations, however, lies outside the scope of this paper and is considered future work.













\bibliography{references}

\appendix

\newpage
\section{Proof of the Basic Liveness Guarantee}
\label{sec:basic-liveness}
To show a basic liveness guarantee, we show that three consecutive correct creators are guaranteed to make progress during a phase of synchrony. Subsequently, we argue that there must exist at least one such constellation with three consecutive correct creators and deduce a basic liveness guarantee in the form of \Cref{thm:liveness:block-liveness}.

\subsection{Round Synchronization}
To begin with, we introduce some terminology to describe the possible states of the system over time.
\begin{definition}
    A node~$u$ is said to be in round $r_u(t)$ at time~$t$, that is, the value of its \textnormal{\texttt{round}} variable at time~$t$. Furthermore, let
    $$r_{max}(t) := \max_{u\,\in\,\texttt{correct}} r_u(t)$$
    be the maximum round of any correct node.
\end{definition}
\begin{definition}
    A round~$i$ is \textit{started}, once the first correct node has started the execution of round~$i$. More formally, round~$i$ is started at time~$T_i\,$, where
    $$T_i := \min\{t \mid r_{max}(t) \geq i\}.$$
    We say that a round~$i$ is \textit{unified}, if all correct nodes start the execution of that round in the interval $[T_i, T_i + \Delta]$.
\end{definition}

Next, we demonstrate that any correct creator will sufficiently synchronize the round numbers of all correct nodes.
\begin{lemma}
    \label{lem:liveness:correct-unify-round}
    Assuming synchrony, starting a round~$i$ with a correct creator~$u$ implies that round~$i + 1$ is unified.
\end{lemma}
\begin{proof}
    Observe that round $i = r_{max}(t)$. Round~$i + 1$ starts at time~$T_{i+1}$ when the first correct node~$v$ receives a \block or a \proposal 
    or $2f + 1$ \timeout messages 
    for round~$i$.
    
    Suppose that $v$ has received a \block or \proposal first. Given that the creator~$u$ is correct and the only node capable of producing a \block or \proposal in round~$i$, $u$ must have created the \block or \proposal. As $u$ is correct, it must also be the first correct node to receive the created \block or \proposal (without network delay), so $v = u$. This immediately determines~$T_{i+1}$.
    
    
    Otherwise, suppose that $v$ receives $2f + 1$ \timeout messages first. Again, $v$ immediately transitions to round~$i + 1$ and thus determines~$T_{i+1}$.
    
    In any case, $v$ will immediately broadcast the created \block or \proposal 
    or the $2f + 1$ collected \timeout messages 
    to all nodes at time $T_{i + 1}$. Due to synchrony, each correct node is guaranteed to receive the \block, \proposal or $2f + 1$ \timeout messages within the message delay~$\Delta$ and will immediately transition to round~$i + 1$, no later than time $T_{i + 1} + \Delta$.
    %
\end{proof}
Note that not every correct creator~$u$ necessarily reaches round~$i$ at all. In the case that $u$ does not ever reach round~$i$, round~$i + 1$ would be started as soon as some other correct node~$v$ receives $2f + 1$ \timeout messages for round~$i$.

\subsection{Making Progress}
Subsequently, we show that a correct creator of a unified round will always create a \block or issue a \proposal during a phase of synchronous operation, independent of the correct nodes' \current positions.
\begin{lemma}
    \label{lem:liveness:block-or-proposal}
    Assuming synchrony, a correct creator~$u$ will create a \block or issue a \proposal in a unified round~$i$.
\end{lemma}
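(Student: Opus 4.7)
The plan is to combine the round-synchronization bound from \Cref{lem:liveness:honest-unify-round} with the timing assumptions in~\eqref{eq:timeouts} to show that, in a unified round, every honest creator collects at least $2f+1$ permits well before its creator timeout expires, and that no early-exit condition can fire.

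First I would fix notation. By definition of a unified round, every honest node starts round~$i$ at some time in $[T_i, T_i + \Delta]$, and on \cref{alg:send-permit} sends a \permit{} to the creator~$u$ at that very moment. Let $t_u \in [T_i, T_i + \Delta]$ be the time $u$ itself starts round~$i$; this is when $u$'s creator phase begins and thus when its local creator timeout clock starts. Since the last honest node sends its permit no later than $T_i + \Delta \leq t_u + \Delta$, and under synchrony each message is delivered within~$\Delta$, the creator~$u$ has received all $2f+1$ honest permits for round~$i$ by time $T_i + 2\Delta \leq t_u + 2\Delta$.

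Next I would invoke the lower bound $\texttt{creator timeout} > 2\Delta$ from~\eqref{eq:timeouts}: $u$'s creator timeout does not expire before time $t_u + 2\Delta$, so $u$ has actually processed all $2f+1$ honest permits (line for storing newly received permits) while still inside the inner \texttt{Repeat}-loop. Now a simple case split finishes the argument. If among those $2f+1$ honest permits (plus any byzantine ones) some position accumulates $2f+1$ matching permits, then the test on \cref{alg:enough-permits-block} succeeds and $u$ creates a \block{} on \cref{alg:create-block}. Otherwise the creator timeout eventually fires (\cref{alg:creator-timeout}); but at that moment $u$'s permits dictionary still contains at least the $2f+1$ honest permits, so the test on \cref{alg:enough-permits-proposal} passes and $u$ issues a \proposal{} on \cref{alg:issue-proposal}.

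The main obstacle is ruling out the third early-exit branch, namely that $u$ aborts its creator phase because it saw a \block{}, \proposal{}, or $2f+1$ \timeout{} messages for a later round. I would argue as follows. A valid \block{} or \proposal{} for round $j > i$ requires $2f+1$ permits from round~$j$, which in particular needs at least $f+1$ honest permits for round~$j$. But at any time in $[t_u, t_u + 3\Delta]$ no honest node has yet left round~$i$: its local round timeout exceeds $5\Delta > 3\Delta$ (by~\eqref{eq:timeouts}) and, by \Cref{lem:liveness:honest-unify-round} together with the synchrony assumption, no honest node has received a valid \block{}, \proposal{} or $2f+1$ \timeout{} messages for round~$i$ yet either (such an object could only originate from $u$, the unique round-$i$ creator, or from timed-out honest nodes, neither of which has acted). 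Hence no honest node has issued any permit or timeout for a round $\geq i+1$, so the third break condition on round~$i$'s creator cannot trigger before either a \block{} is broadcast on \cref{alg:create-block} or a \proposal{} is broadcast on \cref{alg:issue-proposal}. This completes the proof.
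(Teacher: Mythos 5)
Your proof is correct and its core is the same timing argument as the paper's: all honest nodes of a unified round send their permits by $T_i+\Delta$, synchrony delivers them by $T_i+2\Delta$, and since the creator's timeout starts no earlier than $T_i$ and exceeds $2\Delta$ by \eqref{eq:timeouts}, the creator holds at least $2f+1$ permits before the timeout fires, so it either creates a \block{} on \cref{alg:create-block} or issues a \proposal{} on \cref{alg:issue-proposal}. Your additional step ruling out the third break condition (a \block{}, \proposal{}, or $2f+1$ \timeout{} messages for a later round, which would need $f+1$ honest permits or timeouts for a round beyond $i$ that cannot yet exist) is a point the paper's proof leaves implicit and only touches on later in \Cref{lem:liveness:block-or-proposal-received}; it is essentially right, though stating it cleanly requires a ``first honest node to act for a round $>i$'' minimality argument and the comparison $T_i+5\Delta > t_u+3\Delta$ rather than simply $5\Delta>3\Delta$.
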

\begin{proof}
    In a unified round~$i$, all correct nodes start the execution of the round within the interval $[T_i, T_i + \Delta]$. Hence, the correct nodes will send their \permit to the creator~$u$ 
    for round~$i$ latest at time $T_i + \Delta$. Assuming synchrony, the correct creator~$u$ is thus guaranteed to have received at least $2f + 1$ permits at time $T_i + 2 \Delta$.
    
    Given that the creator~$u$ did not begin round~$i$ before the time~$T_i$, it must have started its \texttt{creator timeout} earliest at time~$T_i$. Consequently, the creator~$u$ receives all permits that arrive until $T_i + \texttt{creator timeout}$. Following \Cref{eq:timeouts}, we have
    $$T_i + 2 \Delta \; < \; T_i + \texttt{creator timeout}\,.$$
    Thus, by the time the \texttt{creator timeout} expires, the creator~$u$ will either have created a \block already, or issue a \proposal now as it received at least $2f + 1$ \permits from the correct nodes.
\end{proof}
If all correct nodes issue \permits for the same position, we observe that a correct creator will create a \block.
\begin{corollary}
    \label{cor:liveness:block-if-unanimity}
    Assuming synchrony, a correct creator~$u$ will create a \block in a unified round~$i$ if all correct nodes issue \permits for the same position.
\end{corollary}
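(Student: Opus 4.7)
The plan is to reuse the timing argument of \Cref{lem:liveness:block-or-proposal} and strengthen its conclusion by exploiting the new assumption that all honest \permits{} point to the same position. First I would recall that in a unified round~$i$ every honest node begins the round within the interval $[T_i, T_i + \Delta]$ and, by \cref{alg:send-permit}, immediately sends its \permit{} to the creator~$u$; under synchrony, all $2f + 1$ honest \permits{} therefore reach $u$ no later than time $T_i + 2\Delta$.

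Next I would invoke the unanimity hypothesis: since every honest \permit{} is for the same position~$p$, at time $T_i + 2\Delta$ the creator's local \permits{} dictionary contains at least $2f + 1$ entries all indexing position~$p$. The timing bound $2\Delta < \texttt{creator timeout}$ from \Cref{eq:timeouts} guarantees that this occurs strictly before $u$'s \texttt{creator timeout} expires, so the condition at \cref{alg:enough-permits-block} becomes true on some iteration of the creator's inner loop before the timeout branch at \cref{alg:creator-timeout} can fire. Consequently, $u$ assembles a \Proof{} from the $2f + 1$ matching \permits{} and broadcasts a \block{} at \cref{alg:create-block}, rather than a \proposal{}.

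The main point requiring care is ensuring that the algorithm genuinely prefers block creation over proposals, i.e.\ that the check at \cref{alg:enough-permits-block} is evaluated on every pass through the loop before the \texttt{creator timeout} check at \cref{alg:creator-timeout} can trigger the \proposal{} branch. Once this is established, the argument is otherwise a direct refinement of the synchronous timing bounds already used in \Cref{lem:liveness:block-or-proposal}.
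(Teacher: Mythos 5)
Your proposal is correct and matches the paper's reasoning: the corollary is obtained exactly as you describe, by rerunning the timing argument of \Cref{lem:liveness:block-or-proposal} (all $2f+1$ honest \permits{} reach $u$ by $T_i + 2\Delta$, strictly before the \texttt{creator timeout} expires) and observing that unanimity makes these \permits{} satisfy the condition at \cref{alg:enough-permits-block}. The ordering concern you flag is settled directly by the pseudocode of \Cref{alg:swiss-bft}, where the block-creation check at \cref{alg:enough-permits-block} precedes the timeout branch at \cref{alg:creator-timeout} in every iteration of the creator loop.
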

In order to guarantee progress on the ledger, we confirm that any \block or \proposal created by a correct creator in a unified round will be accepted by all correct nodes before each of them transitions to the next round.
\begin{lemma}
    \label{lem:liveness:block-or-proposal-received}
    Assuming synchrony, a \block or \proposal created by a correct creator in a unified round~$i$ will be received and accepted as the new \current position by all correct nodes before starting to execute the next round~$i + 1$.
\end{lemma}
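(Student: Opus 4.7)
The plan is to carry out a straightforward timing calculation building on \Cref{lem:liveness:block-or-proposal} and the timeout bounds in \Cref{eq:timeouts}. Since round~$i$ is unified, every honest node (including the creator~$u$) starts the execution of round~$i$ somewhere in the interval $[T_i, T_i + \Delta]$. In particular, the creator~$u$ starts its \texttt{creator timeout} no later than $T_i + \Delta$, so by the reasoning in the proof of \Cref{lem:liveness:block-or-proposal} it has either already broadcast a \block{} or will broadcast a \proposal{} by time
$$T_i + \Delta + \texttt{creator timeout} \; < \; T_i + \Delta + 3\Delta \; = \; T_i + 4\Delta,$$
using $\texttt{creator timeout} < 3\Delta$ from \Cref{eq:timeouts}.

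Under the synchrony assumption, this broadcast \block{} or \proposal{} reaches every honest node within an additional message delay of $\Delta$, hence no later than time $T_i + 5\Delta$. I then need to verify that no honest node has already left round~$i$ by this time. The only two ways an honest node can leave round~$i$ are (i) by a local \texttt{round timeout}, or (ii) by receiving a \block{}, a \proposal{}, or $2f+1$ \timeout{} messages for a strictly later round. For (i), each honest node started its \texttt{round timeout} no earlier than $T_i$, so the earliest possible expiry is $T_i + \texttt{round timeout} > T_i + 5\Delta$ by \Cref{eq:timeouts}. For (ii), the only node that could legitimately produce a \block{} or \proposal{} for round~$i$ is the honest creator~$u$, and no later-round \block{} or \proposal{} can exist yet (it would need to have been preceded by $2f + 1$ permits for a later round, requiring honest participation). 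Likewise, no honest node has sent a \timeout{} for a later round, so $2f + 1$ such timeouts cannot be forged.

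Consequently, at the time the \block{} or \proposal{} arrives at an honest node, that node is still executing round~$i$, so it enters the branch at \cref{alg:receive-result}, updates its \current{} position at \cref{alg:update-current}, and breaks out to the round-increment step. This yields the claim. The only mildly subtle point — and the place I would be most careful — is bookkeeping the worst-case offsets: the creator may start round~$i$ up to $\Delta$ after $T_i$ while some other honest node may start as early as $T_i$, so the round-timeout slack must absorb both the $\Delta$-delayed start of the creator and the two message delays (permit up, block/proposal down). The chosen bound $\texttt{round timeout} > 5\Delta$ is precisely what makes this comparison strict, so the argument goes through without any case analysis beyond the two termination conditions.
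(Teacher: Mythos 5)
Your timing bookkeeping (creator starts its \texttt{creator timeout} by $T_i+\Delta$, broadcast by $T_i+4\Delta$, receipt by $T_i+5\Delta$, no \texttt{round timeout} expiry and hence no honest \timeout{} message before $T_i+5\Delta$) matches the paper's argument exactly. The gap is in your case (ii): you dismiss the possibility of a later-round result with the claim that no \block{} or \proposal{} for a round $j\geq i$ ``can exist yet'' because it would require honest participation in a later round. But such honest participation is entirely possible within the window. Message delays are only upper-bounded by $\Delta$, and the round-$i$ creator may produce its \block{} as soon as it collects $2f+1$ \permits{}, possibly long before its \texttt{creator timeout}. Honest nodes that receive this \block{} quickly immediately advance to round $i+1$ and send \permits{} for it (\cref{alg:send-permit}), so the creator of round $i+1$ (honest or byzantine) can legitimately assemble a \Proof{} and broadcast a round-$(i+1)$ \block{} while some slower honest node has not yet received the round-$i$ result. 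That slow node would then fast-forward past round~$i$ (\cref{alg:ffw-round-result}) without ever executing \cref{alg:update-current} for the round-$i$ result, which is precisely the scenario your argument rules out incorrectly.

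The paper does not claim this scenario is impossible; instead it handles it. It argues that any \block{} or \proposal{} for a later round must be built from \permits{} of at least $f+1$ honest nodes, and since no honest node can have timed out of round~$i$ before $T_i+5\Delta$, those honest nodes must already have accepted the round-$i$ result, so the later result extends from the round-$i$ \block{} or respects the round-$i$ \proposal{}'s position. Hence an honest node that fast-forwards via the later result \emph{implicitly} accepts the round-$i$ result as part of its new \current{} position, which is how the lemma's conclusion is salvaged in that case. Your proof needs this additional argument (or an equivalent one); without it, the claim that every honest node is still in round~$i$ when the round-$i$ result arrives is simply not justified.
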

\begin{proof}
    Recall from the proof of \Cref{lem:liveness:block-or-proposal} that a correct creator~$u$ of a unified round~$i$ will have created a \block or issued a \proposal (latest) by the time its \texttt{creator timeout} expires.
    
    Given that round~$i$ is a unified round, the creator~$u$ has started its \texttt{creator timeout} latest at time~$T_i + \Delta$. Consequently, the \texttt{creator timeout} is guaranteed to expire before $T_i + 4 \Delta$ (following \Cref{eq:timeouts}). Now, the correct creator~$u$ immediately sends the created \block or \proposal to all nodes
    . All correct nodes thus receive the \block or \proposal latest at time $T_i + 5 \Delta$ due to the assumed synchrony of the network among correct nodes.
    
    By definition of $T_i$, we know that no correct node has started their local \texttt{round timeout} before time~$T_i$. Following \Cref{eq:timeouts}, we have
    $$T_i + 5 \Delta \; < \; \texttt{round timeout}\,.$$
    Consequently, no correct node's \texttt{round timeout} has expired at time $T_i + 5 \Delta$ and, therefore, no correct node would have created a \timeout message at this point. Since the byzantine nodes can create at most $f$ \timeout messages, no correct node would have received $2f + 1$ \timeout messages. Thus, all correct nodes will receive and accept the created \block or \proposal 
    before starting to execute the next round~$i+1$.
    
    Note that a correct node would fast-forward to a round~$j + 1 > i$, if it received a \block, a \proposal, or $2f + 1$ \timeout messages for round~$j$
    . On the one hand, no such round~$j \geq i$ could start before round~$i$ (i.e.\ $T_j \geq T_i$); hence, no correct node would have created a \timeout message for round~$j$ by the time~$T_i + 5 \Delta$.
    
    On the other hand, any \block or \proposal for round~$j - 1$ would have to be created using the \permits received from at least $f + 1$ correct nodes. However, given that there cannot exist sufficiently many \timeout messages to abort round~$i$, these $f + 1$ correct nodes must have already received and accepted the \block or \proposal of round~$i$. In consequence, a \block or \proposal created in a following round must either extend from the \block created in round~$i$, or respect the position of the \proposal issued in round~$i$, respectively. In any case, a correct node accepting such a \block or \proposal to fast-forward to round~$j + 1$ instead, will implicitly accept the \block or \proposal of round~$i$.
\end{proof}
Before we establish the basic liveness guarantee for the \prot algorithm, we observe one final property:
\begin{lemma}
    \label{lem:liveness:3-consecutive}
    There exists at least one sequence of three consecutive correct creators within $n + 2$ consecutive rounds.
\end{lemma}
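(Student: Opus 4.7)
The plan is to reduce the statement to a purely combinatorial claim about the cyclic arrangement of creator slots and then finish by a pigeonhole argument.

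First I would observe that the creator assignment $\mathtt{creator} = \mathtt{round}\ \mathsf{mod}\ n$ is purely round-robin, so in any $n+2$ consecutive rounds $r, r+1, \ldots, r+n+1$ the induced sequence of creators is $(r \bmod n, (r+1) \bmod n, \ldots, (r+n-1) \bmod n, r \bmod n, (r+1) \bmod n)$. The last two entries deliberately duplicate the first two. Consequently, as the window of three consecutive positions slides through this length-$(n+2)$ sequence, it realizes every cyclic triple $(a, a{+}1, a{+}2) \bmod n$ exactly once. Therefore it suffices to exhibit, in the cyclic arrangement of the $n$ nodes, a triple of three consecutive nodes that are all honest.

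Next I would set up the cyclic pigeonhole argument. Arrange the $n$ nodes around a circle in their round-robin order. There are at most $f$ byzantine nodes on this circle, and these byzantine nodes partition the remaining $n - f \geq 2f + 1$ honest nodes into at most $f$ maximal arcs of consecutive honest nodes. Summing arc lengths gives $n - f \geq 2f + 1$ honest nodes distributed among at most $f$ arcs, so the average arc length is strictly greater than $2$. Hence some arc contains at least $3$ consecutive honest nodes. (One may prefer to argue by contradiction: if every arc had length $\leq 2$, the total honest count would be at most $2f$, contradicting $n - f \geq 2f + 1$.)

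Finally, I would combine the two parts: the arc of three consecutive honest creators in the cyclic arrangement is realized, by the first observation, as some window of three consecutive positions inside any $n+2$ consecutive rounds, yielding three consecutive honest creators as required. The main obstacle is not really a mathematical one but a bookkeeping one, namely convincing oneself that the ``$+2$'' in $n+2$ is exactly what is needed to turn the cyclic statement into a linear one by doubling the first two slots at the end; with the explicit description of the creator sequence above, this is immediate. Note that $n + 2$ is tight in general, because if the honest arc happens to straddle the wrap-around, then $n + 1$ consecutive rounds capture only two of its three honest creators.
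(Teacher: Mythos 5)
Your proof is correct and follows essentially the same route as the paper: a pigeonhole/contradiction argument showing that the $f$ byzantine nodes can split the $2f+1$ honest nodes into at most $f$ arcs, so some arc has length at least $3$, combined with the observation that $n+2$ consecutive rounds realize every cyclic triple of consecutive creators. Your explicit description of the length-$(n+2)$ creator sequence just makes the paper's brief "+2" bookkeeping step more precise.
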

\begin{proof}
    We have $n = 3f + 1$ nodes, whereof $f < \frac{n}{3}$ are byzantine. Recall that we choose the creator of a round in a round-robin (wrap-around) scheme. By arranging the nodes in a circle, we may pick the first of the three consecutive correct creators among the next $n$ nodes. The third consecutive correct creator will be reached within at most $n + 2$ rounds. It remains to show that one iteration of the circle contains three consecutive correct creators.
    
    To that end, consider any fixed order of the $2f + 1$ correct creators in a circle. The correct creators can be divided into sequences of consecutive correct creators by placing byzantine nodes in between. Note that, by placing the $f$ byzantine nodes, the correct creators can be partitioned into at most $f$ sequences. Now, if we assume for contradiction that each of these sequences of consecutive creators would have length at most 2, then there could only be $2f < 2f + 1$ correct nodes
    \linebreak
    -- a contradiction.
\end{proof}
Ultimately, we may now conclude \Cref{thm:liveness:block-liveness}:
\restate{restate:block-liveness}
\begin{proof}
    Once a sufficiently long period of synchrony occurs, it follows from \Cref{lem:liveness:3-consecutive} that there must be three consecutive rounds $r_1$, $r_2$ and $r_3$ with correct creators within the next $n + 2$ rounds. A round can be terminated either by creating a \block or \proposal, or with $2f + 1$ \timeout messages. Hence, given that there are $2f + 1$ correct nodes, the round~$r_1$ will be started at some point.
    
    By \Cref{lem:liveness:correct-unify-round}, the rounds $r_2$ and $r_3$ are unified rounds. Furthermore, by \Cref{lem:liveness:block-or-proposal,lem:liveness:block-or-proposal-received}, the creator of round~$r_2$ creates a \block or issues a \proposal that will be received (and accepted as the new \current position) by each correct node before starting to execute the next round~$r_3$. Note that all correct nodes will now issue \permits for the same position in round~$r_3$. Consequently, by \Cref{cor:liveness:block-if-unanimity} and \Cref{lem:liveness:block-or-proposal-received}, the creator of round~$r_3$ is guaranteed to create a \block that is received and accepted by all correct nodes before starting the next round.
\end{proof}

\end{document}